\theoremstyle{plain}
\newtheorem{theorem}{Theorem}[section]
\newtheorem{proposition}[theorem]{Proposition}
\newcommand{\statedim}{d_x}
\newcommand{\obsdim}{d_y}
\newcommand{\vparsp}{\Lambda}
\newcommand{\statesp}{\mathbb{R}^{\statedim}}
\newcommand{\obssp}{\mathbb{R}^{\obsdim}}
\newcommand{\nrmbackwdweight}[3][\parvar]{\bar{w}_{#2-1|#2}^{#1, #3}}
\newcommand{\sample}[2]{\xi_{#1}^{#2}}
\newcommand{\afelbo}[1]{f_{#1}^{\lambda}}
\newcommand{\addfelbo}[2][\parvar]{\tilde{f}_{#2}^{#1}}
\newcommand{\grad}{\nabla}
\newcommand{\gradalt}[1]{\nabla_{#1}}
\newcommand{\parvar}{\lambda}
\newcommand{\md}[1]{g_{#1}}
\newcommand{\hd}[1]{m_{#1}}
\newcommand{\pE}{\mathbb{E}}
\newcommandx\post[2][1=]{
\ifthenelse{\equal{#1}{}}
	{\ensuremath{\phi_{#2}}}
	{\ensuremath{\phi_{#2}^\N}}
}
\newcommand{\rset}{\ensuremath{\mathbb{R}}}
\newcommand{\rmd}{\ensuremath{\mathrm{d}}}
\newcommand{\eqdef}{\ensuremath{:=}}
\newcommand{\eqsp}{\;}
\newcommand{\jointd}[1]{p_{#1}}
\newcommand{\vd}[2][\parvar]{q_{#2}^{#1}} % Density
\newcommand{\qg}[1]{\ell_{#1}}
\newcommand{\expect}[2]{\mathbb{E}_{#1}\left[#2\right]}
\newcommand{\approxexpect}[2]{\widehat{\mathbb{E}}_{#1}\left[#2\right]}
\newcommand{\ELBO}[2][\parvar]{\mathcal{L}^{#1}_{#2}}
\newcommand{\approxELBO}[2][\parvar]{\widehat{\mathcal{L}}^{#1}_{#2}}
\newcommand{\Hstat}[2][\parvar]{H^{#1}_{#2}}
\newcommand{\Gstat}[2][\parvar]{G^{#1}_{#2}}
\newcommand{\backwd}[1]{b_{#1}}
\newcommand{\fwdpot}[2][\parvar]{\psi_{#2}^{#1}}
\newcommand{\approxHstat}[3][\parvar]{\hat{H}_{#2}^{#1, #3}}
\newcommand{\approxGstat}[3][\parvar]{\hat{G}_{#2}^{#1, #3}}
\newcommand{\paramapproxvarcondE}[2][H]{#1_{\hat{\gamma}_{#2}}^\parvar}
\newcommand{\paramvarcondE}{H_{\gamma}^\parvar}
\title{Importance sampling for online variational learning}
\date{}
\author[$\ddag$]{Mathis Chagneux}
\author[$\star$]{Pierre Gloaguen}
\author[$\dag$]{Sylvain Le Corff}
\author[*]{Jimmy Olsson}
\affil[$\ddag$]{{\small  T\'el\'ecom Paris, Institut Polytechnique de Paris.}}
\affil[*]{{\small  LMBA,Universit\'e Bretagne Sud.}}
\affil[$\dag$]{{\small LPSM, 
       Sorbonne Universit\'e, UMR CNRS 8001.}}
\affil[*]{{\small  Department of Mathematics, KTH Royal Institute of Technology.}}
\begin{document}

\maketitle

\begin{abstract}
 This article addresses online variational estimation in state-space models. 
We focus on learning the smoothing distribution, i.e. the joint distribution of the latent states given the observations, using a variational approach together with Monte Carlo importance sampling. 
We propose an efficient algorithm for computing the gradient of the evidence lower bound (ELBO) in the context of streaming data, where observations arrive sequentially. 
Our contributions include a computationally efficient online ELBO estimator, demonstrated performance in offline and true online settings, and adaptability for computing general expectations under joint smoothing distributions.
\end{abstract}

\section{Introduction}
This article considers online variational estimation problems in state-space models (SSMs), where the observations $Y_{0:t}$\footnote{In all the paper, $a_{u:v}$ is a short-hand notation for $(a_u,\ldots,a_v)$.} depend on a hidden Markov chain denoted by $X_{0:t}$. In this setting, a classical goal is to learn the \textit{smoothing distribution}, which is the conditional distribution of $X_{0:t}$ given $Y_{0:t}$. A popular approach to approximate such a posterior distribution is to rely on particle smoothing, see \citet{chopin2020introduction} and the references therein, but the latter suffers from the curse of dimensionality in the state dimension \cite{bengtsson2008curse}, which has recently motivated the use of variational inference as an alternative. Here, the posterior distribution is approximated by a distribution $\vd{0:t}$, depending on some unknown parameter $\parvar\in\vparsp$. This parameter is then trained by maximizing the evidence lower bound (ELBO): 
\begin{equation}
\label{eq:ELBO}
\ELBO{t} = \pE_{\vd{0:t}}\left[\log \frac{\jointd{0:t}(X_{0:t},Y_{0:t})}{ \vd{0:t}(X_{0:t})}\right] \eqsp,
\end{equation}
where $\jointd{0:t}$ is the joint distribution of the hidden states and the observations. 
Classical optimization procedures rely on the gradient of $\ELBO{t}$ with respect to $\parvar$. Most associated works can be divided into two categories. On the one hand, a large body of methods are only amenable to offline estimation \cite{krishnan2017structured, lin2018variational, johnson2016}, i.e. one needs to have access the entire sequence of observations $Y_{0:t}$ beforehand to compute the ELBO and its gradients. 
On the other hand, some methods provide online procedures via alternative variational objectives which depart from \eqref{eq:ELBO} using additional assumptions on the state dependencies under the variational law \cite{Marino2018AGM,dowling2023real}.
In this paper, we propose an efficient algorithm to update the actual gradient of $\ELBO{t}$ in the context of streaming data, when observations arrive on the fly and are processed only once. 
%In other words, having computed $\nabla_\parvar \ELBO{t}$ at time $t$, we aim at computing $\nabla_\parvar \ELBO{t}$ without using $Y_{0:t}$.
% We propose the following contributions.

\begin{itemize}
    \item We propose a computationally efficient online estimator of the ELBO when the variational distribution has a Markov structure. Contrary to computationally intensive Sequential or Markov Chain Monte Carlo methods, our algorithm relies on simple i.i.d samples from the variational distribution.
    \item Experimentally, we demonstrate the performance of our estimators in the offline setting, and assess its performance in online settings with many observations to learn time series representations.
    \item The proposed algorithm is not limited to the online computation and optimization of the ELBO and can be directly adapted to compute more general expectations under distributions admitting a Markovian structure.
\end{itemize}

\section{Model and learning task}

Consider a SSM where the hidden Markov chain in $\statesp$ is denoted by $(X_t)_{t\geqslant 0}$. 
The distribution of $X_0$ has density $\chi$ with respect to the Lebesgue measure $\mu$ and for all $t \geqslant 0$, the conditional distribution of $X_{t+1} $ given $X_{0:t}$ has density $\hd{t}(X_{t},\cdot)$. In SSMs, it is assumed that this state is partially observed through an observation process $(Y_t)_{0 \leqslant t \leqslant T}$ taking on values in $\obssp$. 
The observations $Y_{0:t}$ are assumed to be independent conditionally on $X_{0:t}$ and, for all $0\leqslant t \leqslant T$, the distribution of $Y_t$ given $X_{0:t}$ depends on $X_t$ only and has density $\md{t}(X_t,\cdot)$ with respect to the Lebesgue measure. 
The whole model is then defined by the joint distribution of hidden states and observations:
\begin{equation*}
\label{eq:joint:ssm:dist}
\jointd{0:t}(x_{0:t},y_{0:t}) = \prod_{s=0}^{t}\qg{s}(x_{s-1},x_{s},y_s),
\end{equation*}
where $\qg{s}(x_{s-1},x_s,y_s) \eqdef \hd{s}(x_{s-1}, x_s)\md{s}(x_{s-1},y_{s})$ for $s \geq 1$ and $\ell_0(x_{-1}, x_0, y_0) \eqdef  \chi(x_0)\md{0}(x_{0},y_{0})$.

% In the following, for any measure $\nu$ on a measurable space $(\mathsf{X},\mathcal{X})$ and any measurable function $h$ on $\mathsf{X}$, write $\nu h = \int h(x)\nu(\rmd x)$. 
% In addition, for any measurable spaces $(\mathsf{X},\mathcal{X})$ and $(\mathsf{Y},\mathcal{Y})$, any measure $\nu$ on  $(\mathsf{X},\mathcal{X})$, any kernel $K:(\mathsf{X},\mathcal{Y})\to \rset_+$ and any measurable function $h$ on $\mathsf{X}\times \mathsf{Y}$, write $K h: x\mapsto  \int h(x,y)K(x,\rmd y)$ and $\nu K h = \int h(x,y)\nu(\rmd x)K(x,\rmd y)$. 
% For simplicity, if for all $x\in \mathsf{X}$, $K(x,\cdot)$ has a density $k(x,\cdot)$ with respect to a reference measure $\nu$, we write $k h: x\mapsto  \int h(x,y)K(x,\rmd y) = \int h(x,y)k(x,y)\nu(\rmd y)$. 
% Let also $\1$ be the constant function which equals 1 on $\mathbb{R}^d$. For any probability density $p$, $\mathbb{E}_p[X]$ denotes the expectation of $X$ when $X$ has probability density $p$.

\subsection{Smoothing in latent data models}
\label{sec:smoothing_latent_data}

A classical learning task in SSMs is \textit{state inference}, which is to estimate the smoothing distribution, \emph{i.e.}, the conditional p.d.f. of $X_{0:t}$ given $Y_{0:t}$.  
This distribution is given by
$$
\post{0:t}(x_{0:t}) \propto \jointd{0:t}(x_{0:t},y_{0:t})\eqsp.
$$
The marginal at time $t$ of this joint distribution is known as the \textit{filtering distribution} at time $t$, and its density w.r.t. the Lebesgue measure is written  $\post{t}$. 
 % Additionally, the normalizing constant in the previous equation is the likelihood of the observations $Y_{0:t}$. We denote with $l_t$ the $\log$ of this quantity, referred to as the \textit{log-likelihood}, which is the objective function in maximum likelihood approaches (MLE). 
It is straightforward to express the joint smoothing density through the following \textit{backward factorization}:
\begin{equation}
    \label{eq:post:backward:factorization}
    \post{0:t}(x_{0:t}) = \post{t}(x_t)\prod_{s=1}^{t} \backwd{s-1|s}(x_s, x_{s-1}) \eqsp. 
\end{equation}
where, for $1\leqslant s \leqslant t$,
\begin{equation}
    \label{eq:true_backwd}
    \backwd{s-1|s}(x_s, x_{s-1}) = \frac{\hd{s}(x_{s - 1},x_{s})\post{s - 1}(x_{s-1})}{\int \hd{s}(x_{s-1},x_{s})\post{s - 1}(x_{s-1}) \, \rmd x_{s-1}},
\end{equation}
referred to as the \textit{backward kernels}, provide the conditional p.d.f. of $X_{s-1}$ given $(X_{s}, Y_{0:s-1})$.
It is worth noting that \eqref{eq:true_backwd} emphasizes the Markov structure of the smoothing distribution.
Unfortunately, this distribution lacks generally a closed-form expression due to the integral in the denominator and the intractability of the filtering distributions.

\subsection{Backward sequential variational inference}
\label{sec:backwd_seq_inf}
In variational approaches the smoothing distribution $\post{0:t}$ is approximated by choosing a candidate in a parametric family $\{ \vd{0:t}\}_{\parvar \in \vparsp}$, referred to as the \textit{variational} family, where $\vparsp$ is a parameter set. 

A critical point therefore lies in the form of the variational family. 
Motivated by the Markov structure of the smoothing distribution, most works impose structure on the variational family via a factorized decomposition of $\vd{0:t}$ over $x_{0:t}$. 
% A practical strategy is to introduce independence or conditional independence properties between the hidden states and the observations, yet \citet{Bayer2021MindTG} recently showed that correctly conditioning according to the true data generating model is key for accurate inference.
A variational counterpart of \eqref{eq:post:backward:factorization}, introduced in the work of \citet{campbell2021online}, is to define
\begin{equation}
\label{eq:varpost:backward:factorization}
 \vd{0:t}(x_{0:t})=  \vd{t}(x_t)\prod_{s=1}^{t}\vd{s-1\vert s}(x_{s},x_{s - 1})\eqsp,
\end{equation}
where $\vd{t}$ (resp. $\vd{s - 1\vert s}(x_{s},\cdot)$) are user-chosen p.d.f. whose parameters depend on $Y_{0:t}$ (resp. $Y_{0:s-1}$). 
A key advantage of this factorization is that it respects the true dependencies involved in \eqref{eq:true_backwd}.
Additionally, \citet{chagneux2022amortized} established an upper bound on the error when expectations w.r.t. the smoothing distribution are approximated by expectations w.r.t. variational 
distributions satisfying this backward factorization.
In the following, we consider variational distributions satisfying \eqref{eq:varpost:backward:factorization}.
Variational inference \cite{blei2017} then consists in learning the best $\parvar$ by maximizing the ELBO given in \eqref{eq:ELBO}. 
This is typically done via gradient ascent algorithms that requires to compute the gradient of the ELBO. 
The next sections depict a new algorithm to approximate this gradient recursively. 

% In equation \eqref{eq:varpost:backward:factorization}, the terminal distribution $\vd{t}$ is only defined at a given $t$. In contrast, under the true model, the marginal  $\phi_t$ of $\post{0:t}$ is part of the flow of distributions $(\phi_s)_{s \geq 0}$ given by the Bayesian filtering recursions. 

\section{Recursion for the gradient of the ELBO}
\label{sec:main:algorithm}
For concise notations, write\footnote{The dependency of each term $\addfelbo{t}$ on $y_t$ is omitted to lighten the notations.}
\begin{equation}
\label{eq:elbo:pair:terms}
\addfelbo{t}(x_{t-1}, x_{t}) = \left\lbrace
\begin{array}{lr}
\log \ell_0(x_{-1}, x_0, y_0) & \text{if } t = 0,\\
\log \frac{\ell_t(x_{t-1}, x_t, y_t)}{q^{\parvar}_{t-1\vert t}(x_t, x_{t-1})} & \text{if }  t > 0
\end{array} \right.
\end{equation}
and  $\afelbo{0:{t}}(x_{0:t}) = \sum_{s = 0}^{t}\addfelbo{s}(x_{s-1}, x_{s})$. Then,
$$
\ELBO{t} = \pE_{ \vd{0:t}}\left[\afelbo{0:{t}}(X_{0:t}) - \log \vd{t}(X_t)\right].
$$
In the following results, all gradients are computed w.r.t. $\parvar$.

\begin{proposition}
\label{prp:elbo:grad:recursion}
The ELBO and its gradient satisfy:
\begin{align}
\ELBO{t} &= \expect{\vd{t}}{\Hstat{t}(X_t)} - \expect{\vd{t}}{\log\vd{t}(X_t)}\\
\grad \ELBO{t}&= \expect{\vd{t}}{\{\grad \log \vd{t} \cdot \Hstat{t}\}(X_t) + \Gstat{t}(X_t)}\label{eq:elbo:grad:update}
,
\end{align}
where $\Hstat{t}(x_t)$ is a function from $\statesp$ to $\mathbb{R}$ satisfying the recursion
\begin{equation}
\label{eq:Hstat:recursion}
    \Hstat{t}(x_t) = \pE_{q^{\parvar}_{t-1\vert t}(x_t, \cdot)}\left[\Hstat{t-1}(X_{t-1}) + \addfelbo{t}(X_{t-1}, x_t)\right]\eqsp,
\end{equation}
with $\Hstat{0}(x_0) = \addfelbo{0}(x_{-1}, x_{0})$, and $\Gstat{t}(x_t)$  is a function from $\statesp$ to $\vparsp$ satisfying $\Gstat{0}(x_0) = 0$ and
\begin{equation}
\label{eq:Gstat:recursion}
\Gstat{t}(x_t) = \mathbb{E}_{\vd{t-1|t}(x_t,\cdot)}\left[\Gstat{t-1}\left(X_{t-1}\right) + \grad \log \vd{t-1|t}(x_t, X_{t-1}) \mathsf{H}_{t-1}^\parvar(X_{t-1}, x_t)\right]\eqsp,
\end{equation}
with  $\mathsf{H}_{t-1}^\parvar(x_{t-1}, x_t) = \Hstat{t-1}(x_{t-1}) + \addfelbo{t}(x_{t-1}, x_t)$. 
\end{proposition}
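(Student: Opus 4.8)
\textbf{Step 1: the $\Hstat{t}$ recursion and the ELBO identity.} The plan is to identify $\Hstat{t}(x_t)$ with the conditional expectation of the additive functional $\afelbo{0:t}$ given $X_t = x_t$ under $\vd{0:t}$, i.e.\ the integral of $\afelbo{0:t}(x_{0:t-1},x_t)$ against the conditional variational law of $x_{0:t-1}$ given $x_t$, namely $\prod_{s=1}^{t}\vd{s-1|s}(x_s,x_{s-1})$. Since the backward factorization \eqref{eq:varpost:backward:factorization} gives $\vd{0:t}(x_{0:t}) = \vd{t}(x_t)\prod_{s=1}^{t}\vd{s-1|s}(x_s,x_{s-1})$ and the entropy term $\log\vd{t}(X_t)$ depends only on $X_t$, the tower property yields $\ELBO{t} = \expect{\vd{t}}{\Hstat{t}(X_t)} - \expect{\vd{t}}{\log\vd{t}(X_t)}$ at once. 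For the recursion I would combine the additive decomposition $\afelbo{0:t} = \afelbo{0:t-1} + \addfelbo{t}(x_{t-1},x_t)$ with the Markov structure of the conditional law: conditionally on $X_t = x_t$ one first draws $X_{t-1}\sim\vd{t-1|t}(x_t,\cdot)$ and then $X_{0:t-2}$ from the time-$(t-1)$ conditional law given $X_{t-1}$. Integrating $\afelbo{0:t-1}$ against the inner law reproduces $\Hstat{t-1}(X_{t-1})$, while $\addfelbo{t}(X_{t-1},x_t)$ passes through the inner integral unchanged, which is exactly \eqref{eq:Hstat:recursion}; the base case is the definition of $\Hstat{0}$.

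\textbf{Step 2: the crux, $\Gstat{t} = \grad\Hstat{t}$.} I would prove the pointwise identity $\Gstat{t} = \grad\Hstat{t}$ by induction on \eqref{eq:Hstat:recursion}, written as $\Hstat{t}(x_t) = \expect{\vd{t-1|t}(x_t,\cdot)}{\mathsf{H}_{t-1}^\parvar(X_{t-1},x_t)}$. Differentiating through both the kernel and the integrand and using the score-function (log-derivative) identity gives
\[
\grad\Hstat{t}(x_t) = \expect{\vd{t-1|t}(x_t,\cdot)}{\grad\mathsf{H}_{t-1}^\parvar(X_{t-1},x_t) + \mathsf{H}_{t-1}^\parvar(X_{t-1},x_t)\,\grad\log\vd{t-1|t}(x_t,X_{t-1})}\eqsp.
\]
The key cancellation is that $\grad\addfelbo{t}(x_{t-1},x_t) = -\grad\log\vd{t-1|t}(x_t,x_{t-1})$, since $\ell_t$ does not depend on $\parvar$; hence the $\grad\addfelbo{t}$ piece inside $\grad\mathsf{H}_{t-1}^\parvar$ is annihilated by the zero-mean-score property $\expect{\vd{t-1|t}(x_t,\cdot)}{\grad\log\vd{t-1|t}(x_t,X_{t-1})} = 0$. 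The surviving term $\expect{\vd{t-1|t}(x_t,\cdot)}{\grad\Hstat{t-1}(X_{t-1})}$ equals $\expect{\vd{t-1|t}(x_t,\cdot)}{\Gstat{t-1}(X_{t-1})}$ by the induction hypothesis, and combining with the score term recovers precisely \eqref{eq:Gstat:recursion}; the base case holds because $\Hstat{0}$ is $\parvar$-independent, so $\grad\Hstat{0} = 0 = \Gstat{0}$.

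\textbf{Step 3: conclusion and main obstacle.} With $\Gstat{t} = \grad\Hstat{t}$ established, I would differentiate the energy term $\expect{\vd{t}}{\Hstat{t}}$ once more at the level of the marginal $\vd{t}$, giving $\grad\expect{\vd{t}}{\Hstat{t}} = \expect{\vd{t}}{\grad\Hstat{t}(X_t) + \Hstat{t}(X_t)\grad\log\vd{t}(X_t)} = \expect{\vd{t}}{\Gstat{t}(X_t) + \{\grad\log\vd{t}\cdot\Hstat{t}\}(X_t)}$, which is the right-hand side of \eqref{eq:elbo:grad:update}. I expect the main obstacle to be twofold. First, the bookkeeping of the double dependence on $\parvar$ — through both the sampling kernels $\vd{s-1|s}$ and the integrand $\afelbo{0:t}$ — at every level of the induction: the recursion closes only because of the cancellation noted above, and checking it at each step is the most error-prone part. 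Second, the entropy term $\expect{\vd{t}}{\log\vd{t}}$ requires care, since the recursion above yields the gradient of the energy $\expect{\vd{t}}{\Hstat{t}}$; one must verify that the gradient of $-\expect{\vd{t}}{\log\vd{t}}$, a marginal-only quantity typically computed in closed form or by reparameterization of the user-chosen $\vd{t}$, is incorporated consistently with \eqref{eq:elbo:grad:update}.
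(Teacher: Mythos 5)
Your Steps 1 and 2 are correct, and in substance they follow the paper's own argument. The paper likewise identifies $\Hstat{t}(x_t)$ with the conditional expectation $\expect{\vd{0:(t-1)\vert t}}{\afelbo{0:t}(X_{0:t-1}, x_t)}$ induced by the backward factorization \eqref{eq:varpost:backward:factorization} and obtains \eqref{eq:Hstat:recursion} by additivity, exactly as you do. For $\Gstat{t}$, the paper differentiates the trajectory-level expression, removes the pathwise-derivative term via the identity $\grad \afelbo{0:t} = -\grad\log\vd{0:(t-1)\vert t}$ combined with the zero-mean property of the score, and then splits off the time-$t$ factors to recognize $\Gstat{t-1}$ and $\Hstat{t-1}$. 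Your induction on \eqref{eq:Hstat:recursion} applies the identical cancellation ($\grad\addfelbo{t} = -\grad\log\vd{t-1\vert t}$ plus zero-mean score) one transition at a time; this is a mild and, if anything, cleaner reorganization of the same proof, making explicit the induction that the paper leaves implicit when it recognizes $\Gstat{t-1}$ inside the trajectory expectation.

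The worry you raise in Step 3, however, marks a genuine gap that cannot be verified away, because the entropy-gradient term does not vanish. What your Step 3 actually proves is
\begin{equation*}
\grad\expect{\vd{t}}{\Hstat{t}(X_t)} = \expect{\vd{t}}{\{\grad\log\vd{t}\cdot\Hstat{t}\}(X_t) + \Gstat{t}(X_t)}\eqsp,
\end{equation*}
i.e.\ the right-hand side of \eqref{eq:elbo:grad:update} is the gradient of the energy term alone, whereas
\begin{equation*}
\grad\expect{\vd{t}}{\log\vd{t}(X_t)} = \expect{\vd{t}}{\log\vd{t}(X_t)\,\grad\log\vd{t}(X_t)}
\end{equation*}
is in general nonzero: only the pure score has zero mean, and for instance for a Gaussian $\vd{t}$ with learned standard deviation $\sigma$ this expectation equals $-1/\sigma$ (minus the gradient of the entropy); it vanishes only when the entropy of $\vd{t}$ is $\parvar$-free (e.g.\ a pure location family). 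The paper's own proof carries this term all the way through and concludes with
\begin{equation*}
\grad\ELBO{t} = \expect{\vd{t}}{\Gstat{t}(X_t) + \left(\Hstat{t}(X_t) - \log\vd{t}(X_t)\right)\grad\log\vd{t}(X_t)}\eqsp,
\end{equation*}
which differs from the displayed claim \eqref{eq:elbo:grad:update} precisely by $-\expect{\vd{t}}{\log\vd{t}(X_t)\,\grad\log\vd{t}(X_t)}$. So the inconsistency you sensed is real, and it sits between the proposition as printed and the appendix proof: a complete argument must either retain the entropy-gradient contribution, as the paper's appendix does, or add a hypothesis under which it vanishes; it cannot be absorbed into $\Gstat{t}$ or into the score term of \eqref{eq:elbo:grad:update}. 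Your write-up should therefore finish by including that term explicitly rather than deferring it as a consistency check.
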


\begin{proof}
The proof is postponed to Appendix~\ref{appdx:proof:prp:elbo:recursion}.
\end{proof}

For a fixed sequence of length $t$, recursive computation of $\grad \ELBO{t}$ therefore consists in (i) computing recursively $\grad \Hstat{t}(x_t)$ from 0 to $t$ using the recursions of Proposition \ref{prp:elbo:grad:recursion} and (ii) computing the final expectation \eqref{eq:elbo:grad:update}. 
As this final expectation is w.r.t. the chosen variational distribution $\vd{t}$, standard Monte Carlo sampling can be used to approximate it.
It remains to approximate online the intermediate functions $\Hstat{t}(x_t)$ and $G_t^\parvar(x_t)$, involving conditional expectations with respect to the kernels $\vd{t-1\vert t}$. 
Note that these kernels aim to approximate the true backward kernels $b_{t-1|t}$, which  depend only on observations $Y_{0:t-1}$, \emph{i.e.}, only on the past, and hence, are prone to online learning.

\section{Approximation of the gradient}
\label{sec:main_algorithm:mc} 

\subsection{Offline computation}
\label{sec:offline:computation}

For the sake of clarity, we first depict the algorithm for a fixed size sequence of observations $Y_{0:T}$. For a fixed $\parvar$, we use Proposition \ref{prp:elbo:grad:recursion} to compute $\grad \ELBO{T}$. 
We suppose we have access to a sequence of variational distributions $(\vd{t})_{t\geq 0}$ and $(\vd{t - 1\vert t})_{t\geq 1}$ which can be evaluated and sampled from easily. 
The choice of such sequences is discussed in Section \ref{sec:computational_considerations}.

\paragraph{Initialization. } Simulate a $N$-sample $\lbrace \sample{0}{j} \rbrace_{1\leq j \leq N} \overset{i.i.d}{\sim} \vd{0}$, set $\approxHstat{0}{j} = \Hstat{0}(\sample{0}{j}), \approxGstat{0}{j} = \Gstat{0}(\sample{0}{j})=0$. 
The key point for the propagation step is that these two functionals are known (at $t = 0$) only on a \textit{finite support}.

\paragraph{Recursive approximation of $\Hstat{t}$ and $\Gstat{t}$. } At time $t$, simulate a $N$-sample $\lbrace \sample{t}{i} \rbrace_{1\leq i \leq N} \overset{i.i.d}{\sim} \vd{t}$. 
$\Hstat{t}(\sample{t}{i})$ and  $\Gstat{t}(\sample{t}{i})$ are approximated respectively by:
\begin{align}
\approxHstat{t}{i} =& \sum_{j=1}^N \nrmbackwdweight{t}{i,j}
\left(\approxHstat{t-1}{j} + \addfelbo{t}(\sample{t-1}{j}, \sample{t}{i}) \right) \eqsp, \label{eq:approx:Hstat:offline}\\ 
\approxGstat{t}{i} =& \sum_{i=1}^N \nrmbackwdweight{t}{i,j}\left\{\approxGstat{t - 1}{j} + \grad \log \vd{t-1|t}(\xi_{t}^i, \xi_{t-1}^j) \left( \approxHstat{t-1}{j} + \addfelbo{t}(\xi_{t-1}^j, \xi_t^i) \right)\right\}\eqsp,  \label{eq:approx:Gstat:offline}
\end{align}
where
\begin{equation}
    \label{eq:backwardweights:offline}
    \nrmbackwdweight{t}{i,j} = \frac{\vd{t-1 \vert t}(\sample{t}{i}, \sample{t-1}{j})/\vd{t-1}(\sample{t-1}{j})}{\sum_{k=1}^N \vd{t-1 \vert t}(\sample{t}{i}, \sample{t-1}{k})/\vd{t-1}(\sample{t-1}{k})}\eqsp.
\end{equation}
Estimators \eqref{eq:approx:Hstat:offline} and \eqref{eq:approx:Gstat:offline} are \textit{self-normalized importance sampling} (SNIS) estimators of equations \eqref{eq:Hstat:recursion} and \eqref{eq:Gstat:recursion}.  
Equation \eqref{eq:backwardweights:offline} gives the shared importance weights of these estimator. 
It is worth noting that we cannot do direct Monte Carlo approximations of \eqref{eq:Hstat:recursion}-\eqref{eq:Gstat:recursion} by simulating samples from $\vd{t-1\vert t}(\sample{t}{i},\cdot)$, as the functionals $\Hstat{t-1}$ and $\Gstat{t-1}$ would have no approximation on such samples. 
The use of importance sampling is thus mandatory to update the approximations. 
It is known, though, that performance of such estimator strongly rely on the importance distribution. 
Section \ref{sec:computational_considerations} proposes an efficient implementation to link the proposal distribution $\vd{t}$ to the target $\vd{t - 1\vert t}$.
The self-normalization in \eqref{eq:backwardweights:offline} is motivated by computational considerations as it reduces the variance of the estimator in our simulation settings.

\paragraph{Estimators at final time. } At final time $T$, simulate a $N$-sample $\lbrace \sample{T}{i} \rbrace_{1\leq i \leq N} \overset{i.i.d}{\sim} \vd{T}$, compute $\approxHstat{T}{i}$ and $\approxGstat{T}{i}$ using \eqref{eq:approx:Hstat:offline}-\eqref{eq:approx:Gstat:offline}, and approximate the ELBO and its gradient with:
\begin{align}
\approxELBO{T} &= \frac{1}{N}\sum_{i = 1}^N\left\{ \approxHstat{T}{i} - \log\vd{T}(\sample{T}{i})\right\}, \label{eq:approx:ELBO:T}\\
\widehat{\grad} \ELBO{T} &= \frac{1}{N}\sum_{i = 1}^{N}\left\{ \grad \log \vd{T}(\sample{T}{i}) \cdot \approxHstat{T}{i} + \approxGstat{T}{i}\right\}\eqsp. \label{eq:approx:grad:ELBO:T}
\end{align}

Note that an appealing alternative to estimator \eqref{eq:approx:grad:ELBO:T}  would be to perform auto-differentiation of estimator \eqref{eq:approx:ELBO:T}, as it would avoid to perform the recursion \eqref{eq:approx:Gstat:offline}.
However, in our setting, this approach is flawed because the approximation \eqref{eq:approx:Hstat:offline} is biased due the SNIS. 
Building gradients via auto-differentiation of a biased estimator can lead to catastrophic divergence, especially in the case of ELBO maximization which is based on an upper bound. 
Typically, the auto-differentiation will lead to the parameters that maximize the bias of the approximation.

\subsection{Online computation}
\label{sec:online:computation}

In the online computation framework, we aim at approximating the gradient at each time $t$, using the new observation, and updating the current parameter $\parvar_t$ using this gradient. 

\paragraph{Initialization. } Starting from an initial guess $\parvar_0$, the initialization is the same as in Section \ref{sec:offline:computation}. 
Then the first gradient is approximated with:
$$
\widehat{\grad} \ELBO[\parvar_0]{0} = \frac{1}{N} \sum_{i = 1}^{N} \grad \log \vd[\parvar_0]{0}(\sample{0}{i}) \cdot \approxHstat[\parvar_0]{0}{i} + \approxGstat[\parvar_0]{0}{i}\eqsp.
$$
Then, $\parvar_1$ is set by updating $\parvar_0$ using this gradient, typically by setting $\parvar_1 = \parvar_0 + \gamma_0\widehat{\grad} \ELBO[\parvar_0]{0}$ for some step size $\gamma_0$.

\paragraph{Recursive approximation of $\Hstat[\parvar_t]{t}$ and $\Gstat[\parvar_t]{t}$. }

At time $t$, simulate a $N$-sample $\lbrace \sample{t}{i} \rbrace_{1\leq i \leq N} \overset{i.i.d}{\sim} \vd[\parvar_t]{t}$. 
$\Hstat[\parvar_t]{t}(\sample{t}{i})$ and  $\Gstat[\parvar_t]{t}(\sample{t}{i})$ are approximated respectively by:
\begin{align}
\approxHstat[\parvar_t]{t}{i} =& \sum_{j=1}^N \nrmbackwdweight[\parvar_t]{t}{i,j}
\left(\approxHstat[\parvar_{t-1}]{t-1}{j} + \addfelbo[\parvar_t]{t}(\sample{t-1}{j}, \sample{t}{i}) \right) \eqsp, \label{eq:approx:Hstat:online}\\ 
\approxGstat[\parvar_t]{t}{i} =& \sum_{i=1}^N \nrmbackwdweight[\parvar_t]{t}{i,j}\left\{\approxGstat[\parvar_{t-1}]{t - 1}{j} + \grad \log \vd[\parvar_{t}]{t-1|t}(\xi_{t}^i, \xi_{t-1}^j)  \left( \approxHstat[\parvar_{t - 1}]{t-1}{j} + \addfelbo[\parvar_t]{t}(\xi_{t-1}^j, \xi_t^i) \right)\right\}\eqsp,  \label{eq:approx:Gstat:online}
\end{align}
where
\begin{equation}
    \label{eq:backwardweights:online}
    \nrmbackwdweight[\parvar_t]{t}{i,j} = \frac{\vd[\parvar_{t}]{t-1 \vert t}(\sample{t}{i}, \sample{t-1}{j})/\vd[\parvar_{t - 1}]{t-1}(\sample{t-1}{j})}{\sum_{k=1}^N \vd[\parvar_{t}]{t-1 \vert t}(\sample{t}{i}, \sample{t-1}{k})/\vd[\parvar_{t - 1}]{t-1}(\sample{t-1}{k})}\eqsp.
\end{equation}
Equations \eqref{eq:approx:Hstat:online}-\eqref{eq:backwardweights:online} are almost identical to \eqref{eq:approx:Hstat:offline}-\eqref{eq:backwardweights:offline}.
The key difference is presence of quantities, on the right hand sides of \eqref{eq:approx:Hstat:online}-\eqref{eq:approx:Gstat:online}, that were computed with $\parvar_{t - 1}$, thus introducing new approximations. 
These approximations, which are commonly made in recursive maximum likelihood settings, are mandatory for practical implementation of online learning. 
Section \ref{sec:experiments} shows that these approximation, which would make theoretical study more complex, still lead to good results in practice.

\section{Computational considerations}
\label{sec:computational_considerations}

\paragraph{Defining backward kernels using forward potentials. }
Equations \eqref{eq:backwardweights:offline} and \eqref{eq:backwardweights:online} suggest that the performance of the proposed algorithm would strongly rely on the definition of variational distributions, and on the link between $\vd[\parvar_{t-1}]{t-1}$ and $\vd[\parvar_t]{t-1\vert t}$.
We therefore introduce additional structure in the variational family given by \eqref{eq:varpost:backward:factorization}, using potential functions $\fwdpot{t}:\statesp \times \statesp \rightarrow \rset$ to link these distributions. Specifically, we prescribe that, for all $t \geq 1$, 
\begin{equation}
\label{eq:def:varbackwd}
\vd[\parvar_t]{t-1|t}(x_t, x_{t-1})  \propto \vd[\parvar_{t-1}]{t-1}(x_{t-1})\fwdpot[\parvar_t]{t}(x_{t-1}, x_t)\eqsp.
\end{equation}
The functions $\fwdpot[\parvar_t]{t}$ can be made arbitrarily complex, such that, for all $t$, the backward variational kernel $\vd[\parvar_t]{t-1|t}$ has arbitrarily complex dependencies w.r.t $x_t$. 

\paragraph{Backward sampling. }

Computing the backward weights of  \eqref{eq:backwardweights:online} has the drawback of a $O(N^2)$ complexity due to the computation normalizing constant, which can be prohibitive when $N$ is large (typically for high dimensional state spaces).
One solution, introduced by \citet{olsson2017efficient} in the context of particle smoothing, is the \textit{backward sampling}. 
At $t$, given $\sample{t}{i}$, sample independently $M$ indexes $j_1,\dots, j_M \in \{1,\ldots,N\}$ from the multinomial distribution with weights $\{\nrmbackwdweight{t}{i,j}\}_{j \leq N}$, and replace \eqref{eq:approx:Hstat:online} by $\sum_{k=1}^M (\approxHstat[\parvar_{t-1}]{t-1}{j_k} + \addfelbo[\parvar_t]{t}(\sample{t-1}{j_k}, \sample{t}{i}))/M$.
\citet{olsson2017efficient} showed that $M$ can be much smaller than $N$ (typically, $M=2$), and then, this can lead to great improvement in complexity.
Here, noting that $\nrmbackwdweight{t}{i,j} \propto_{j} \fwdpot{t}(\sample{t-1}{j}, \sample{t}{i})\eqsp,$
the backward sampling step is done with an accept-reject mechanism without having to compute the normalizing constant of the weights. 
We refer the reader to  \citet{olsson2017efficient}, \citet{gloaguen2022pseudo} and \citet{dau2022complexity} for alternative backward sampling approach. 

\paragraph{Parameterization of variational distributions. }
\label{sec:implementation}
In practice, employing the backward factorization under decomposition \eqref{eq:def:varbackwd} in the online setting requires explicit parameterization, for all $t \geq 0$, of a distribution $\vd{t}$ and a potential $\fwdpot{t}$ (not necessarily normalized) both of which depend on observations up to time $t$ at most. 
For computational efficiency, each $\vd{t}$ is chosen as a p.d.f. from a parametric distribution within the exponential family. 
This family is denoted as $\mathsf{P} = \{P_\eta\}_{\eta \in \mathcal{E}}$ where $\eta$ is the corresponding natural parameter and $\mathcal{E}$ the parameter space for this family (typically be the family of Gaussian distributions defined on $\statesp$, in our experiments). Let $\eta_t^\parvar$ be  the parameter of $\vd{t}$. 
To ensure that distributions $\vd{t - 1\vert t}$ belong to the same family, we impose that 
$$
    \fwdpot{t}(x_{t-1},x_t) = \exp{(\tilde{\eta}_{t}^{\parvar}(x_t) \cdot T(x_{t-1}))}\eqsp,
$$
where $\tilde{\eta}_{t}^\parvar(x_t) = \mathsf{MLP}^\parvar(x_t)$\footnote{$\mathsf{MLP}$ is used to denote a multi-layer perceptron.} and $T(x_{t-1})$ are a natural parameter and a sufficient statistic for the family $\mathsf{P}$. 
Then, thanks to \eqref{eq:def:varbackwd}, $\vd{t-1|t}(x_t, \cdot)$ will be a p.d.f.  from $\mathsf{P}$ with natural parameter $\eta_{t-1|t}^\parvar = \eta_{t-1}^\parvar + \tilde{\eta}_t^\parvar$. 
In this convenient setting, the backward kernels $\vd{t-1|t}$ can have arbitrarily complex dependencies on $x_t$ while their p.d.f is analytically derived from the potentials. 
This eliminates the necessity to calculate normalizing constants (required, for instance, when computing \eqref{eq:approx:Gstat:online}), all the while avoiding the reduction of these kernels to mere transformations or linearizations (e.g., linear-Gaussian kernels). 
For the parameters of $\vd{t}$ two main approaches exist.
\begin{itemize}
    \item \textit{Amortized schemes} where the parameters of $\vd{t}$ are updated using a parameterized mapping at every time $t$. This can be done using intermediate quantities $a_t \in \mathsf{A}$ (where $\mathsf{A}$ is a user-defined space), such that $a_t = \mathsf{MLP}^\parvar(a_{t-1}, y_t)$, and $\eta_t^\parvar = \mathsf{MLP}^{\parvar}(a_t)$. Initialization is performed using a random parameter $a_{-1}$, which may be fixed or learnt. 
    Amortized schemes are computationally efficient (as knowledge from previous predictions is used to produce the current parameters), but require manually defining complex mappings. 
    The recursions may be analytical (and not rely on a MLP), for example when $\vd{0:t}$ is the smoothing distribution of a linear-Gaussian, or when conjugacy is further leveraged to update the parameters $(\eta_t^\parvar)_{t \geq 0}$ (see Appendix \ref{appdx:impl_details}).
    Whatever the case, while the number of parameters becomes independent of $t$, the computational burden of the backpropagation through the states $(a_s)_{s \leq t}$ grows linearly with $t$. 
    To prevent this, a solution is to truncate backpropagation, i.e. to assume that $(a_s^\parvar)_{s \leq t-\Delta}$ is independent of $\parvar$ for some $\Delta$. 
    \item \textit{Non-amortized schemes} where each $\vd{t}$ and $\fwdpot{t}$ have their own parameter $\eta_t$ and $\tilde{\eta}_t$, not related to those at time $t-1$. In this case, the optimized vector $\parvar$ contains the parameters $(\eta_t)_{t \geq 0}$, and the number of parameters then grows linearly with $t$. This scheme modifies equation \eqref{eq:Gstat:recursion} (see  Appendix \ref{appdx:details:non:amortized} for details).
\end{itemize}

\paragraph{Variance reduction of the gradient estimator. }

Equations \eqref{eq:elbo:grad:update} and \eqref{eq:Gstat:recursion} involve computing expectations of \textit{score functions}, i.e. expectations of the form $\expect{q^\parvar}{\grad \log q^\parvar(X) \cdot f(X)},$ for some p.d.f. $q^\parvar(X)$. 
As studied in \citet{mohamed2020}, direct Monte Carlo estimation of the score-function yields high variance and should typically not be used without a proper variance reduction technique. 
The most straightforward approach is to design a \textit{control variate}. 
Exploiting the fact that $\expect{q^\parvar}{\grad \log q^\parvar(X)} = 0$, the target expectation is then equal to $\mathbb{E}_{q^\parvar}[\grad \log q^\parvar(X) (f(X) - \expect{q^\parvar}{f(X)})]$.
Fortunately, Monte Carlo estimates of $\mathbb{E}_{\vd{t}}[\Hstat{t}]$ and  $\mathbb{E}_{\vd{t - 1\vert t}}[
\Hstat{t-1}(X_{t-1}) + \addfelbo{t}(X_{t-1}, x_t)]$ are directly given as byproducts of algorithm of Section \ref{sec:online:computation}, namely by $N^{-1}\sum_{i = 1}^N \approxHstat{t}{i}$ and $\{\approxHstat{t}{i}\}_{1\leq i \leq N}$. 
Our methodology comes built-in with variance reduction without the need to recompute additional quantities.
As an alternative to this variance reduction technique, a natural consideration arises regarding the potential use of the reparametrization trick, as it often results in Monte Carlo estimators with lower variance compared to those obtained using the score function.
However, implementing the reparametrization trick in this context necessitates expressing $\grad \ELBO{t}$ as an expectation with respect to a random variable $Z_{0:t}$ that does not depend on $\parvar$. Moreover, the recursive expression of this expectation at time $t + 1$ must be derivable from its predecessor, which poses a non-trivial challenge.
For example, in the classical case where $q_{0:t}$ is the p.d.f. of a multivariate Gaussian random variable with mean $\mu$ and variance $\Sigma$, and the expectation is taken w.r.t. $Z_{0:t} \sim \mathcal{N}(0, I_{\statedim \times (t + 1)})$, such a recursion is not feasible as the ELBO is no longer an additive functional when $X_{0:t}$ is replaced by  $\mu + \Sigma^{\frac{1}{2}}Z_{0:t}$.

Appendix~\ref{appdx:full:algo} provides the full algorithm using the backward sampling approach and the control variate estimate.

\section{Related work}

\paragraph{Sequential Monte Carlo smoothing. }

The presented methodology draws from recent advances in sequential Monte Carlo (SMC methods) for SSMs, especially by i) proposing Monte Carlo approach for the approximation of conditional expectations under the backward kernels, and ii)  introducing the structure \eqref{eq:def:varbackwd} to link $\vd[\parvar_{t-1}]{t-1}$ and $\vd[\parvar_{t}]{t-1\vert t}$.
As a major difference, though, we emphasize that here, all Monte Carlo samples are obtained by i.i.d. sampling, avoiding the \textit{selection/mutation} steps of SMC, which lead to dependant samples over time. 
We refer the reader to \citet{douc2014nonlinear} (Section 11) for a presentation of the general concepts underpinning this class of algorithms, and \citet{olsson2017efficient}, \citet{gloaguen2022pseudo} and \citet{dau2022complexity} for recent works on this class of algorithms. 
It is worth noting here that the algorithm proposed here can actually be implemented for every expectations of \textit{any additive functional} as defined in these references, and not only the ELBO or it's gradient. 

\paragraph{Sequential variational inference. }

In sequential variational inference, early works tackling smoothing focus on offline scenarios with a forward factorization different from the one used here \cite{johnson2016, krishnan2017structured, Lin2018VariationalMP, Marino2018AGM}. 
A drawback of forward factorizations is the incompatibility with online setting.
\citet{campbell2021online} provides the first work which explores online variational additive smoothing for recursive computation of the ELBO and its gradients.  
The main difference with our approach is the way to approximate conditional expectations with respect to backward kernels. 
The authors rely on functional approximations of the conditional expectations at each timestep. 
The major drawback of this solution is that it requires running an inner optimization (to learn the best regression function as a proxy of the target conditional expectation) \textit{at every iteration $t$}, which may be very costly, and requires an additional choice of the regression functions. 

Recently, \citet{chagneux2022amortized} established the first theoretical result on error control in sequential variational inference, building upon the backward factorization proposed in \citet{campbell2021online}.
This result, coupled with the potential for online learning, serves as motivation for our algorithm.

A distinct line of research \cite{Marino2018AGM, zhao2020variational, dowling2023real} chooses to trade smoothing for filtering by targeting the marginal distributions $(\post{t})_{t \leq 0}$ at each timestep with variational distributions $\vd{t}$ that depend only on the observations up to $t$. 
A distinctive trait of these works is the additional assumption that, at $t$, $\vd{t}$ is a good approximation of $\post{t}$ which can be used in further timesteps to learn the next approximations. 
In practice, this can hardly be verified especially under Gaussian variational families. 
That said, some of the ideas proposed in these works may be relevant for the smoothing problem, which we discuss more extensively as a perspective in Section \ref{sec:perspectives}. 

\section{Experiments}
\label{sec:experiments}

\subsection{Linear-Gaussian HMM}

We first evaluate our algorithm on a Linear-Gaussian HMM, which admits analytical smoothing distribution. such that optimal smoothing is available. We set:
\begin{align*}
X_0 \sim \mathcal{N}(\mu_0, Q_0),~X_t &= A X_{t-1} + \nu_t\eqsp, t \geq 1\eqsp,\\
Y_t &= B X_t + \epsilon_t\eqsp, t \geq 0\eqsp,
\end{align*}
where $\nu_t$ and $\epsilon_t$ are Gaussian centered noises with unknown variances $Q$ and $R$, and  $\mu_0, Q_0, A$ and $B$ are unknown parameters with appropriate dimensions.
%is any vector in $\statesp$, $Q_0$ is a $\statedim$-dimensional symmetric positive-definite matrix, $A$ is a $\statedim$-dimensional square matrix with eigenvalues in $(-1,1)$, $B$ any $\statedim \times \obsdim$-dimensional matrix, $\eta \sim \mathcal{N}(0, Q)$ and $\epsilon \sim \mathcal{N}(0, R)$, with $Q,R$ respectively $p$ and $q$-dimensional symmetric positive-definite matrices. 
In this case the Kalman smoothing recursions yield the smoothing distribution, which is a Gaussian distribution. 
It is then possible to choose a variational model parameterized by $\parvar$ which gets arbitrarily close to the true posterior by prescribing that each $\vd{t}$ is the p.d.f. of a Gaussian distribution and $\vd{t - 1\vert t}$ is a Gaussian kernel with linear dependance on $x_t$. 
In this case, the ELBO can also be computed recursively in closed-form because the conditional expectations $(\Hstat{t})_{t \geq 0}$ are quadratic forms.

\paragraph{Learning in an offline setting. }

We first evaluate our algorithm on a sequence of fixed-length $T$ to showcase that the proposed framework indeed enables to perform a gradient ascent algorithm. 
As an oracle baseline, we can compute the closed-form ELBO and its associated gradient via the reparameterization trick. 
As an alternative, an unbiased offline Monte Carlo estimate of the ELBO is obtained by drawing trajectoires using the backward dynamics given by the kernels $(\vd{t-1|t})_{t \leq T}$ and using the reparameterization trick to obtain is gradient. We refer to this method as \textit{backward MC}. To compare the methods at hand, we evaluate our ability to perform gradient-ascent to optimize the ELBO with respect to $\parvar$. For our recursive method, we choose $\Delta=2$ to truncate the backpropagation, as we observe that $\Delta < 2$ prevents our method from converging altogether, while $\Delta > 2$ only improves convergence speed by a small margin. The experiment is run using $10$ different parameters for the generative model, $\statedim=\obsdim=10$, $T=500$ and $N=2$ for the two methods involving Monte Carlo sampling.
Figure \ref{fig:training_curve_lgm} displays the evolution of the ELBO  using both approaches. 
It shows the convergence of our score-based solution to the correct optimum given by the analytical computations. 
This is particularly appealing and notably demonstrates that our online gradient-estimation method may perform well using few samples. 
In practice, we observe that the variance reduction introduced in Section \ref{sec:computational_considerations} is crucial in reaching such performance. 
Finally, despite the added cost of updating the intermediate quantities for the gradients at each timestep, we observe that the computational times of our solution is about 2.5 time slower than the optimization based on the oracle gradient computed analytically (in average, 89.5 ms per gradient step for our method, 37.1 for the oracle method).
We want to emphasize here that we do not advocate for our method in the context of offline learning (hence for time series of small length).

\begin{figure}[t]
    \centering
    \includegraphics[width = .7\textwidth, trim={0.5cm 0.5cm 0.5cm 0.5cm},clip]{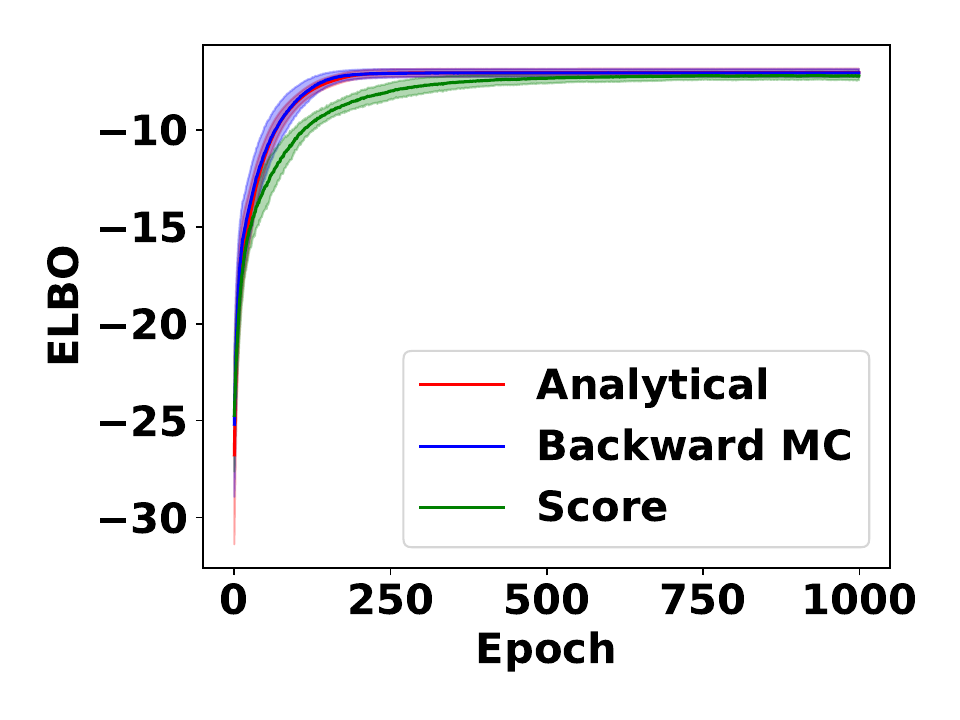}
    \caption{Evolution of $\ELBO{T}/T$ computed with three different methods and with three different types of gradients estimates. Full lines: means of the 10 replicates. Shaded lines: standard deviations of the 10 replicates.}
    \label{fig:training_curve_lgm}
\end{figure}

\paragraph{Online learning from streaming data. }
In a second setting, we keep the same generative and variational models and dimensionality but generate a large sequence of $T=500000$ observations and simulate the optimization of the joint ELBO $\ELBO{T}$. 
The purpose here is to update the variational parameters online, i.e. by discarding already seen data at each step.
In the context of stochastic optimization, since $\ELBO{T} =  \sum_{t=0}^T \ELBO{t} - \ELBO{t-1}$ (with the convention $\ELBO{-1} = 0$), the right quantity to optimize becomes $\grad \{\ELBO{t} - \ELBO{t-1}\}$. 
In practice we update $\parvar_{t +1}$ by setting:
\begin{equation}
    \label{eq:online_updates}
    \parvar_{t+1} = \parvar_t + \gamma_{t+1}\left(\grad \ELBO[\parvar_t]{t} - \grad \ELBO[\parvar_{t-1}]{t-1}\right)\eqsp,
\end{equation}
in order to avoid recomputing the previous gradient\footnote{This approximation is typically made in traditional recursive maximum likelihood methods}. In Figure \ref{fig:training_100k}, we plot the evolution of the ELBO through this optimization process. 

\begin{figure}[t]
    \centering
    \includegraphics[width=0.7\textwidth]{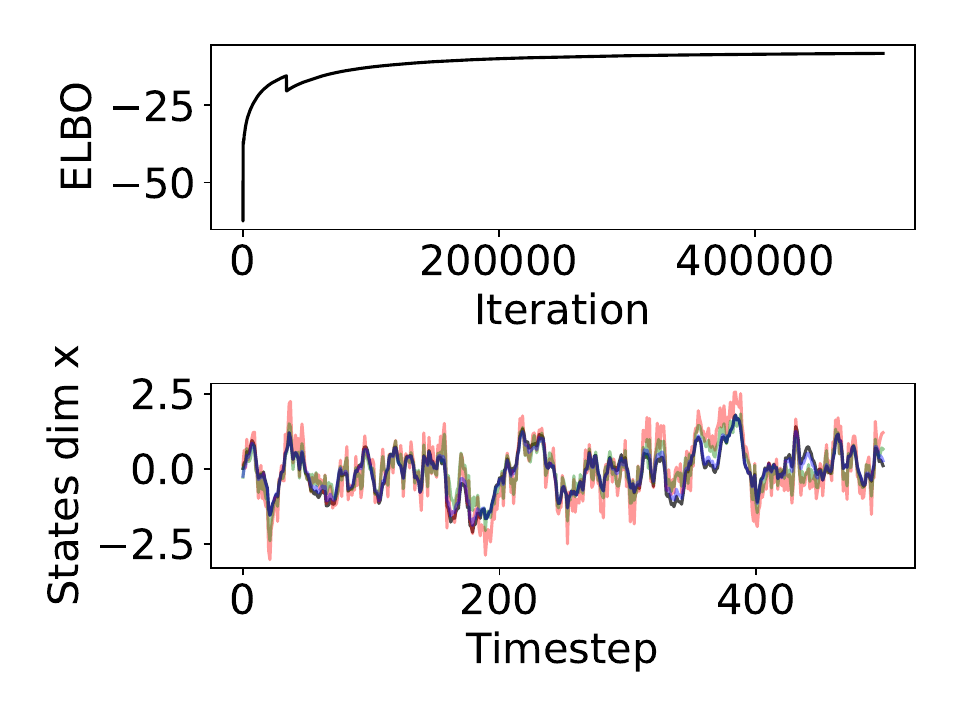}
    \caption{Top: evolution of $\frac{1}{t}\approxELBO{t}$ during the online learning. Bottom: evaluation on a test sequence of 500 observations from the same generative model (over 1 particular dimension, in black) for parameters obtained at iterations $1$ (red), $10,000$ (green) and $500,000$ (blue).}
    \label{fig:training_lgm_100k}
\end{figure}

\subsection{Chaotic recurrent neural network.}
\label{sec:xp:chaotic:rnn}
We now consider the model used in \citet{campbell2021online}, where $X_0 \sim \mathcal{N}(0, Q)$, and,  for $t\geq 1$:
\begin{align*}
X_t =& X_{t-1} + \frac{\Delta}{\tau}\left(\gamma W \tanh{(X_{t-1})} - X_{t-1}\right) + \eta_t\eqsp,\\
Y_t =& X_t + \epsilon_t,~t\geq 0 \eqsp,
\end{align*}
where $\eta \overset{\text{i.i.d.}}{\sim} \mathcal{N}(0, Q)$ is an isotropic Gaussian distribution and $\epsilon$ is a Student-$t$ distribution, these two distributions being mutually independent. 
The hyperparameters chosen are the ones of \citet{campbell2021online} (see Appendix~\ref{appdx:chaotic:rnn}). 

\paragraph{Learning in an offline setting. }
Again, we start by evaluating the performance of our gradients against the backward trajectory sampling approach run on the same model, for a sequence of fixed length $T=500$ with $\statedim=\obsdim=5$. 
For the variational family, we choose the setting presented in the section \ref{sec:implementation} where a linear-Gaussian kernels with parameters $(A^\parvar, Q^\parvar)$ is used both to update the distributions $(\vd{t})_{t \geq 0}$ and the backward kernels.
We run gradient-ascent on $\parvar$ by performing gradient steps using the quantity $\grad \ELBO{T}/T$ approximated via backward trajectory sampling and via our score-based method. 
As before, we use the same hyperparameters and optimization schemes for both methods. 
Table \ref{table:smoothing_rmse_chaotic_dim_5} reports the performance against the true states, averaged over dimensions, i.e. the quantity 
$$\Delta_{T}^\parvar=\frac{1}{T}\sum_{t=1}^T \sqrt{\frac{1}{\statedim}\sum_{k=1}^{\statedim} \left(x_t^{*(k)} - \expect{\vd{0:T}}{X_t^{(k)}}\right)^2}\eqsp.$$

\begin{table}[t]
    \small
    \centering
    \begin{tabular}{||c||c|c||} 
        \hline
        Gradients & $\Delta_{T,p}^\parvar$ ($\times 10^{-2}$) & Avg. time \\ [0.5ex] 
        \hline\hline
        Score-based & 13.5 $\pm$ 0.7 ({\bf 12.2}) & 173 ms \\ 
        \hline
        Backward sampling & 11.9 $\pm$ 0.4 ({\bf 11.4}) & 17 ms \\
        \hline
    \end{tabular}
    \caption{RMSE between the true states $x_t^*$ and the predicted marginal means $\expect{\vd{0:T}}{X_t}$ and average time per gradient step.}
\label{table:smoothing_rmse_chaotic_dim_5}
\end{table}

\paragraph{Recursive gradients in the offline setting. }

Even when we have access to an entire sequence of observations $y_{0:T}$, it can still be beneficial to use the recursive gradients approach for faster convergence. 
Indeed, when gradients are only available after processing the whole \textit{batch}\footnote{i.e. the whole set of observation}, the best we can do at optimization given a fixed number of observations $T$ is to update the parameter with
\begin{equation}
\parvar^{(k+1)}= \parvar^{(k)} + \gamma_{k+1}\grad \ELBO[\parvar^{(k)}]{T}\eqsp,
\label{eq:epoch_update}
\end{equation}
where one such update is usually referred to as an "epoch", and $\parvar^{(k)}$ is the value of estimated parameter after $k$ epochs. 
Using the recursive gradients, one may perform $T$ intermediate updates within an epoch using
\begin{equation}
\parvar_{t+1}^{(k)} = \parvar_{t}^{(k)} + \gamma_{t+1}^{(k)}\left\{\grad \ELBO[\parvar_t^{(k)}]{t+1} - \grad \ELBO[\parvar_{t-1}^{(k)}]{t}\right\}\eqsp,
\label{eq:online_inside_epoch_update}
\end{equation}
and 
$$\parvar_{0}^{(k+1)} = \parvar_{T}^{(k)}\eqsp,$$
i.e. inside one epoch we optimize $\parvar$ recursively on the observations. 
We compare the two options by optimizing on 10 different sequences of $T=500$ observations, performing $10$ epochs on each, using updates of the form \eqref{eq:epoch_update} for the backward trajectory sampling approach and using updates of the form \eqref{eq:online_inside_epoch_update} with our score-based approach. 
Figure \ref{fig:training_curve_online_inside_epoch}, displays the epoch-wise training curves for each method with $\statedim = \obsdim=5$, where we observe that optimizing with intermediate updates of  \eqref{eq:online_inside_epoch_update} converges faster overall.

\begin{figure}
     \centering
    \includegraphics[width = .7\textwidth, , trim={0.5cm 0.5cm 0.5cm 0.5cm},clip]{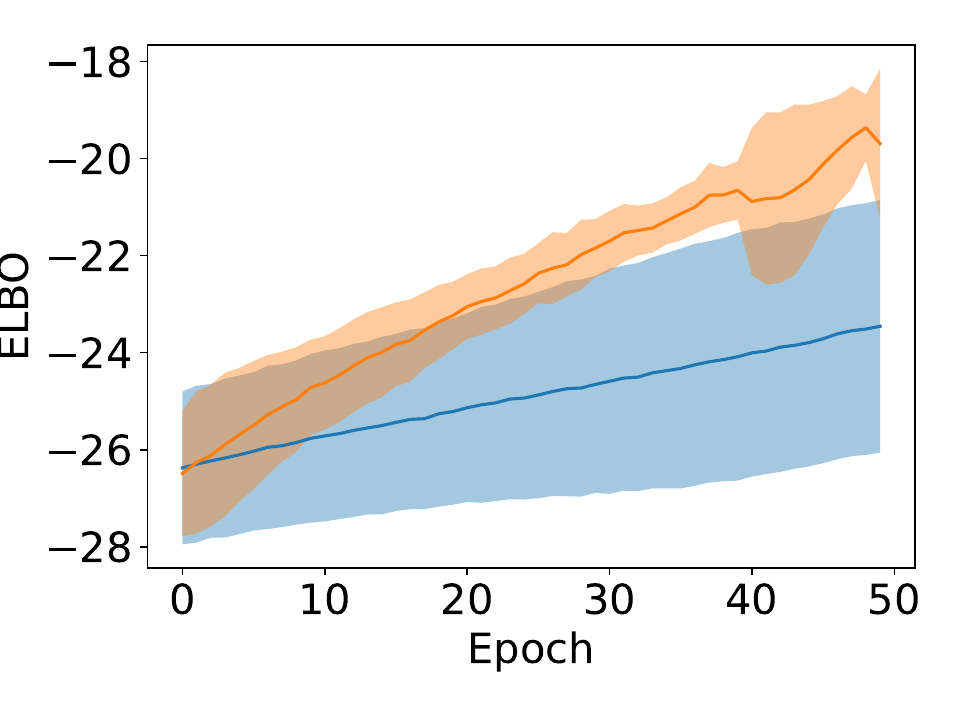}
     \caption{Evolution of $\ELBO{T}/T$ for $\parvar = \parvar^{(k)}$ when performing with temporal updates inside an epoch (via recursive gradients) or without (as in Figure \ref{fig:training_curve_lgm}), $k \in \{0,\ldots,50\}$. The full lines are the average over the 10 different runs, the shaded lines are the standard deviations across these runs.}
    \label{fig:training_curve_online_inside_epoch}
\end{figure}

\paragraph{Comparison with \citet{campbell2021online}.}

The proposed method of this paper mainly differs from \citet{campbell2021online} in the way we approximate the backward statistics $H_t^\parvar$ by using a recursive sampling approach rather than a regression approach. 
In order to compare the two approaches, we reproduce the experiment of appendix B.2 of \citet{campbell2021online}, where the authors evaluate their ability to predict the hidden state one step backward (therefore performing 1-step smoothing).

Specifically, we aim at evaluating the quality of our approach to estimate the conditional law of $X_{t-1}$ given $Y_{0:t}$ and of $X_{t}$ given $Y_{0:t}$ by evaluating $\expect{\vd{t-1:t}}{X_{t-1}}$ and $\expect{\vd{t}}{X_t}$, where $\parvar$ is learnt using the same setting as \citet{campbell2021online}, i.e. a non-amortized scheme  (see \textit{Non-amortized schemes}, Section~\ref{sec:computational_considerations} and Appendix ~\ref{appdx:details:non:amortized}).
The details of implementation are given in Appendix~\ref{appdx:chaotic:rnn}

Table \ref{table:1_step_smoothing_chaotic_rnn}, reports the average 1-step smoothing errors and filtering errors, i.e. the quantities 
$$
\kappa^{(1)}_{T}=\frac{1}{T-1} \sum_{t=1}^{T-1} \sqrt{\frac{1}{\statedim}\sum_{k=1}^{\statedim} \left(\expect{\vd{t-1:t}}{X_{t-1}^{(k)}} - x_{t-1}^{*(k)}\right)^2}
$$ 
and 
$$\kappa^{(2)}_{T}=\frac{1}{T}\sum_{t=1}^T \sqrt{\frac{1}{\statedim}\sum_{k=1}^{\statedim} \left(\expect{\vd{t}}{X_t^{(k)}} - x_t^{*(k)}\right)^2}
$$ 
when training our method in these two settings with $\statedim=\obsdim=5$. 
We also report the errors the computational times for the two methods averaged over 8 runs using 8 different generative models (hence 8 different sequences). 

One can see that for comparable results, our approach based on Monte Carlo for estimating the backward expectation is about 5 times faster than the regression approach.

\begin{table}[t]
    \centering
    \small
    \begin{tabular}{||c||c|c|c||} 
        \hline
        Method & Smooth. & Filt. & Time\\ [0.5ex] 
        \hline
        Ours & 8.9 (0.2) & 10.3 (0.2) & 1 ms \\ 
        \hline
        \citet{campbell2021online} & 9.2 (0.2) & 10.3 (0.2) & 4.8 ms \\
        \hline
    \end{tabular}
    \caption{RMSE ($\times 10^{-2}$) between the true states $x_t^*$ and  $\approxexpect{\vd{t-1:t}}{X_{t-1}}$ (column Smooth.) and  $\approxexpect{\vd{t}}{X_t}$ (Filt.) (with the standard error), and average time per gradient step.}
    \label{table:1_step_smoothing_chaotic_rnn}
\end{table}

\paragraph{Online learning from streaming data. }

Finally, we evaluate the performance in the true online setting when training on a sequence of $T=100,000$ observations using parameter updates of the form of \eqref{eq:online_updates}. 
We choose $\statedim = \obsdim = 10$ and $N=100$ particles. 
To parameterize $\vd{0:t}$ we use the amortized model presented at the beginning of this section. 
Table \ref{table:chaotic_streaming_data} provides the smoothing and filtering RMSE against the true states at the end of optimization. 
We also show the inference performance on new sequences generated under the same generative model. 
The results clearly highlight that the fitted $\parvar$ is relevant for new sequences, and illustrates the performance of our method in the amortized setting. 
This scheme is then appealing when one wants to train a single model on a long stream of incoming data, then re-use it for offline state inference on new sequences of arbitrary length.

\begin{table}[t]
    \centering
    \small
    \begin{tabular}{||c||c|c||} 
        \hline
        Sequence & Smoothing RMSE & Filtering RMSE  \\ [0.5ex] 
        \hline
        Training & 0.281 & 0.311 \\ 
        \hline
        Eval & 0.278 ($\pm$ 0.01) & 0.305 ($\pm$ 0.014) \\
        \hline
    \end{tabular}
    \caption{Smoothing and filtering RMSE values for the training sequence and other sequences drawn from the same generative model, when $\parvar$ is learnt online.}
    \label{table:chaotic_streaming_data}
\end{table}

\begin{figure}[t]
    \centering
    \includegraphics[width=0.7\textwidth, trim={0.5cm 0.5cm 0.5cm 0.5cm},clip]{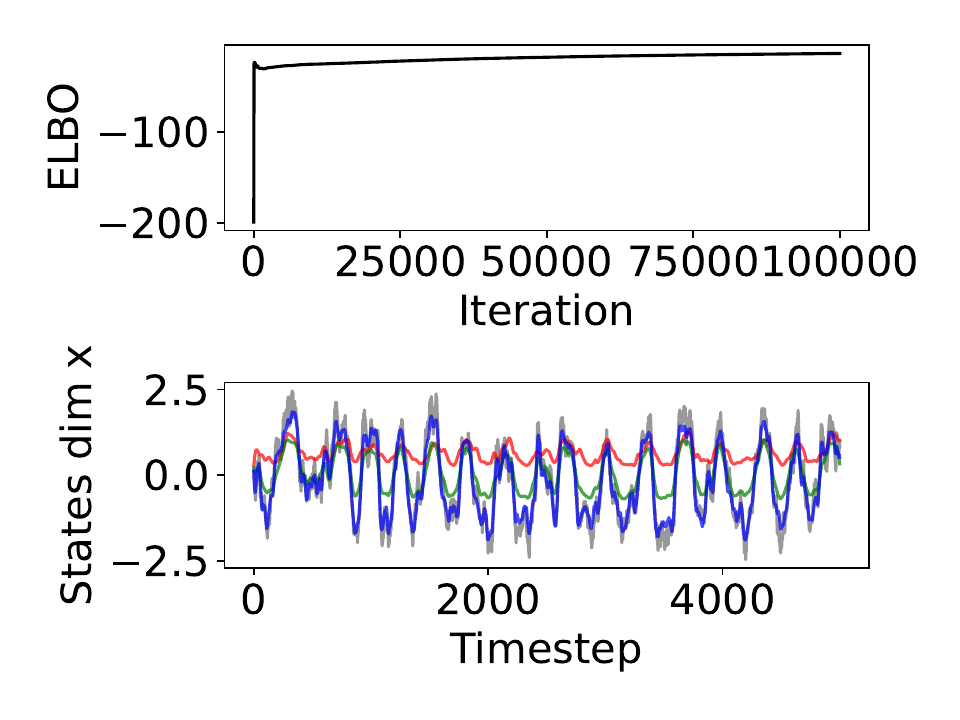}
    \caption{Top: evolution of $\frac{1}{t}\approxELBO{t}$ during the online learning. Bottom: evaluation on a test sequence of 5,000 observations from the same generative model (over 1 particular dimension, in grey) for parameters obtained at iterations $1$ (red), $10,000$ (green) and $100,000$ (blue).}
    \label{fig:training_100k}
\end{figure}

\section{Discussion}
\label{sec:perspectives}
For future research, we identify two directions that could benefit from further investigation. 
First, we have only implemented the versions of our algorithm that rely on exponentially conjugated potentials, and as such more general parameterizations need to be evaluated. 
In practice, when the forward potentials $(\fwdpot{t})_{t \geq 0}$ are arbitrarily parameterized functions, it is expected that more flexible joint variational approximations can be obtained, and hence better results under complex nonlinear models.

Then, a more thorough analysis could be conducted to study the proper \textit{stepwise} objective to optimize in situations where parameter updates are performed at every timestep. 
In this work, we have relied on the decomposition $\mathcal{L}_t^\parvar = \sum_{s=1}^t \mathcal{L}_s^\parvar - \mathcal{L}_{s-1}^\parvar$ as a justification to solve the optimization problem in $\parvar$ via online stochastic gradient updates which maximize the ELBO over time. In contrast, works like \citet{zhao2020variational, dowling2023real} develop an online variational optimization procedure by deriving lower bounds on the incremental likelihood.
In practice, these solutions depart from the original ELBO and formulate intermediate optimization problems at each timestep by deriving a "single step" ELBO from the Kullback-Leibler divergence between $\post{t}$ and $\vd{t}$ at each $t \geq 0$. 
As such, they do not target the smoothing distributions, and a joint variational distribution on the state sequence $X_{0:t}$ is only available in mean-field form $\vd{0:t} = \prod_{s=0}^t \vd{s}$, which does not capture dependencies between the states. 
In \citet{dowling2023real}, however, the focus is put on learning the true model transitions, and they introduce a "hybrid" version of the \textit{predictive} distribution of $X_t$ given $Y_{0:t-1}$ defined as $\bar{q}_t^{\parvar}(x_t) = \mathbb{E}_{\vd{t-1}}[\hd{t}(X_{t-1}, x_t)]$ for all $t \geq 0$, which is used to propagate the variational distributions. In the context of smoothing, we may similarly design variational backward kernels which rely on the true dynamics, i.e. for all $x_t \in \statesp$,
$q_{t-1|t}^{\parvar}(x_t, x_{t-1}) \propto \vd{t-1}(x_{t-1})\hd{t}(x_{t-1}, x_t)$, 
in which case the normalizing constant is precisely $\bar{q}_t^{\parvar}(x_t)$ as defined above.

\bibliography{biblio}
\bibliographystyle{apalike}

\clearpage

\appendix

\section{Proof of proposition \ref{prp:elbo:grad:recursion}}
\label{appdx:proof:prp:elbo:recursion}

We start by the definition of the ELBO:
\begin{align*}
    \ELBO{t} &= \expect{\vd{0:t}}{\log \frac{\jointd{0:t}(X_{0:t},Y_{0:t})}{ \vd{0:t}(X_{0:t})}}&
    \\
    &=  \expect{\vd{0:t}}{\log \frac{\prod_{s = 0}^t\qg{s}(X_{s-1}, X_s, Y_s)}{\vd{t}(X_t)\prod_{s = 1}^t\vd{s-1\vert s}(X_{s}, X_{s - 1})}} & \text{by \eqref{eq:joint:ssm:dist} and \eqref{eq:varpost:backward:factorization}}
    \\
    &= \expect{\vd{0:t}}{\sum_{s = 0}^t \log \frac{\qg{s}(X_{s-1}, X_s, Y_s)}{\vd{s-1\vert s}(X_{s}, X_{s - 1})} - \log\vd{t}(X_t)} & \text{Posing } \vd{-1\vert 0}(x_{0}, x_{-1}) = 1
    \\
    &= \expect{\vd{0:t}}{\sum_{s = 0}^t \addfelbo{s}(X_{s-1}, X_s) - \log\vd{t}(X_t)} & \text{By the defintion \eqref{eq:elbo:pair:terms}}
    \\
    &= \expect{\vd{0:t}}{\sum_{s = 0}^t \addfelbo{s}(X_{s-1}, X_s)} - \expect{\vd{t}}{\log\vd{t}(X_t)} &
    \\
    &= \int \left\{\sum_{s = 0}^t \addfelbo{s}(x_{s-1}, x_s)\right\}\vd{t}(x_t) \prod_{s = 1}^{t}\vd{s-1\vert s}(x_{s}, x_{s-1}) \rmd x_1\dots \rmd x_t  - \expect{\vd{t}}{\log\vd{t}(X_t)}&
    \\
    &= \expect{\vd{t}}{\Hstat{t}(X_t)} - \expect{\vd{t}}{\log\vd{t}(X_t)}\eqsp,&
\end{align*}
where 
\begin{align*}
\Hstat{t}(x_t) \eqdef 
\int \left\{\sum_{s = 0}^t \addfelbo{s}(x_{s-1}, x_s)\right\} \prod_{s = 1}^{t}\vd{s-1\vert s}(x_{s}, x_{s-1}) \rmd x_1\dots \rmd x_{t-1} = \expect{\vd{0:(t-1)\vert t}}{\afelbo{0:t}(X_{0:t-1}, x_t)}\eqsp,
\end{align*}
with $\afelbo{0:t}$  the function defined in Section \ref{sec:main:algorithm}, and $\vd{0:(t-1)\vert t}(x_{0:t-1}, x_t) = \prod_{s = 1}^t  \vd{s-1\vert s}(x_{s}, x_{s-1})$ is a p.d.f., for every $x_t$, for $X_{0:t-1}$. Then, notice that:
\begin{align*}
\Hstat{t}(x_t) &= \int\left(\int \left\{\sum_{s = 0}^{t-1} \addfelbo{s}(x_{s-1}, x_s) + \addfelbo{t}(x_{t-1}, x_t) \right\} \prod_{s = 1}^{t-1}\vd{s-1\vert s}(x_{s}, x_{s-1}) \rmd x_1\dots \rmd x_{t-2} \right)\vd{t-1\vert t}(x_{t}, x_{t-1}) \rmd x_{t-1}\\
 &= \int \left(\Hstat{t -1}(x_{t-1}) + \addfelbo{t}(x_{t-1}, x_t)\right)\vd{t-1\vert t}(x_{t}, x_{t-1}) \rmd x_{t-1}\\
 &= \expect{\vd{t-1\vert t}}{\Hstat{t -1}(X_{t-1}) + \addfelbo{t}(X_{t-1}, X_t)}\eqsp.
\end{align*}
This development then shows the given expression of the ELBO and the recursion over $\Hstat{t}$. Consider the gradient of the ELBO, with respect to $\parvar$.
\begin{align*}
    \grad\ELBO{t} &= \grad \expect{\vd{t}}{\Hstat{t}(X_t)} - \grad \expect{\vd{t}}{\log\vd{t}(X_t)}\eqsp\\
    &= \grad\int \left(\Hstat{t}(x_t) - \log \vd{t}(x_t)\right)\vd{t}(x_t)\rmd x_t\\
    &= \int \left(\grad\Hstat{t}(x_t) - \grad\log \vd{t}(x_t)\right)\vd{t}(x_t)\rmd x_t + \int \left(\Hstat{t}(x_t) - \log \vd{t}(x_t)\right) \grad \vd{t}(x_t) \rmd x_t \\
    &= \expect{\vd{t}}{\grad\Hstat{t}(X_t)} - \overset{=0}{\expect{\vd{t}}{\grad\log\vd{t}(X_t)}} + \int \left(\Hstat{t}(x_t) - \log \vd{t}(x_t)\right) \grad \left(\log \vd{t}(x_t)\right) \vd{t}(x_t)  \rmd x_t\\
    &= \expect{\vd{t}}{\grad\Hstat{t}(X_t) + \left(\Hstat{t}(x_t) - \log \vd{t}(x_t)\right)\times  \grad \log \vd{t}(x_t)}, 
\end{align*}
where $\expect{\vd{t}}{\grad\log\vd{t}(X_t)}=0$ is justified by the fact that $\expect{\vd{t}}{\grad\log\vd{t}(X_t)} = \int \grad \vd{t}(x_t)\rmd x_t = 0$. 

Now, if we denote $\Gstat{t}(x_t) = \grad \Hstat{t}(x_t)$.
\begin{align*}
    \Gstat{t}(x_t) &= \grad \expect{\vd{0:(t-1)\vert t}}{\afelbo{0:t}(X_{0:t-1}, x_t)}\\
    &=\expect{\vd{0:(t-1)\vert t}}{\left\{\grad \log \vd{0:(t-1)\vert t} \times \afelbo{0:t}\right\}(X_{0:t-1}, x_t)}+ \expect{\vd{0:(t-1)\vert t}}{\grad \afelbo{0:t}(X_{0:t-1}, x_t)}.
\end{align*}
Here, it turns out that\footnote{This would not be the case for another additive functional than the one of the ELBO.}, by definition of $\afelbo{0:t}$ :
$$
\grad \afelbo{0:t}(X_{0:t-1}, x_t) = -\grad \log \vd{0:(t-1)\vert t}(X_{0:t-1}, x_t),
$$
therefore, the expectation of this term is equal to 0, and:
$$
\Gstat{t}(x_t) = \expect{\vd{0:(t-1)\vert t}}{\left\{\grad \log \vd{0:(t-1)\vert t} \times \afelbo{0:t}\right\}(X_{0:t-1}, x_t)}\eqsp.
$$
Finally, it remains to show the wanted recursion for $\Gstat{t}(x_t)$:
\begin{align}
    \Gstat{t}(x_t) =& \expect{\vd{0:(t-1)\vert t}}{\left(\grad \log \vd{0:(t-2)\vert t-1}(X_{0:t-1}) + \grad \log \vd{t - 1\vert t}(X_{t-1}, x_{t}) \right)\left(\afelbo{0:t-1}(X_{0:t-1}) + \addfelbo{t}(X_{t-1}, x_t)\right)} \label{eq:appdx:beg:G}\\
    =& \expect{\vd{t-1\vert t}}{\Gstat{t-1}(X_{t-1})} \label{eq:appdx:Gtm1} \\
    &+  \expect{\vd{t-1\vert t}}{\grad \log \vd{t - 1\vert t}(X_{t-1}, x_{t})\left(\expect{\vd{0:(t-2)\vert t-1}}{\afelbo{0:t-1}(X_{0:t-1})} + \addfelbo{t}(X_{t-1}, x_t)\right)} \label{eq:appdx:G:term:Htm1}\\
    &+\expect{\vd{t-1\vert t}}{\addfelbo{t}(X_{t-1}, x_t)\times \expect{\vd{0:(t-2)\vert t-1}}{\grad \log \vd{0:(t-2)\vert t-1}(X_{0:t-1})}}.\label{eq:appdx:end:G}
\end{align}
On the inner expectation of \eqref{eq:appdx:G:term:Htm1} we recognize $\Hstat{t - 1}$ and \eqref{eq:appdx:end:G} is again equal to 0.
We therefore have the wanted result:
$$
\Gstat{t}(x_t) = \expect{\vd{t-1\vert t}}{\Gstat{t-1}(X_{t-1}) + \grad \log \vd{t - 1\vert t}(X_{t-1}, x_{t})\times \left(\Hstat{t-1}(X_{t-1}) + \addfelbo{t}(X_{t-1}, x_t) \right)}\eqsp.
$$

\section{Details on the non amortized scheme}
\label{appdx:details:non:amortized}
In the non amortized scheme, $\parvar$ is a set of disjoint parameters, each of them corresponding to a specific time step.
Namely $\parvar = \lbrace \parvar^0, \dots, \parvar^t\rbrace$ . 
In the notations of the article, the estimate $\parvar_{t-1}$ of $\parvar$ after having processed observations $y_{0:{t-1}}$ is an estimate of the set $\lbrace\parvar^0, \dots, \parvar^{t-1}\rbrace$. 
Therefore, the gradient of the ELBO  will only be with respect to $\parvar^t$. 
This affects the expression of the statistic $\Gstat[\parvar{t}]{t}$, and one can see in equations \eqref{eq:appdx:beg:G}-\eqref{eq:appdx:end:G} that the term \eqref{eq:appdx:Gtm1} will now be 0 when the gradient is taken w.r.t. $\parvar^t$. 
This means that this term no longer has to be propagated.
Indeed, as we set $\vd[\parvar^t]{t-1|t}(x_t, x_{t-1}) \propto \vd[\parvar_{t-1}]{t - 1}(x_{t-1})\fwdpot[\parvar^t]{t}(x_{t-1}, x_t)$, the gradient of the ELBO w.r.t. $\parvar^t$ will be
\begin{align*}
\gradalt{\parvar^t} \ELBO[\parvar^t]{t} =& \mathbb{E}_{\vd[\parvar^t]{t}}
\left[ 
\mathbb{E}_{\vd[\parvar^t]{t - 1\vert t}}
\left[
\grad \log \vd{t}(X_t) \times \addfelbo[\parvar^t]{t}(X_{t-1}, X_t) \right.
\right.
\\
&+ 
\left.\left.\gradalt{\parvar^t} \log \vd[\parvar^t]{t - 1\vert t}(X_{t-1}, x_{t})\times \left(\Hstat[\parvar_{t-1}]{t-1}(X_{t-1}) + \addfelbo[\parvar^t]{t}(X_{t-1}, X_t) \right)\right]\right]\eqsp.
\end{align*}

This gradient will be estimated using Monte Carlo in the same way as in the algorithm. In \citet{campbell2021online}, the inner conditional expectation is estimated with a regression approach of Appendix \ref{appdx:func_regression} instead of importance sampling as in our approach.

\section{Using exponential conjugacy to process observations}
\label{appdx:impl_details}

To further reduce the computational cost, one may actually leverage exponential conjugacy both to update the parameters of the distributions $(\vd{t})_{t \geq 0}$ and to derive the parameters of the backward kernels $(\vd{t-1|t})_{t \geq 0}$. This is possible, for example, whenever $\mathsf{P}$ is the Gaussian family. Denoting $\vec{\eta}$ the function such that $\vec{\eta}^\parvar(x_{t-1})$ is the natural parameter vector of the linear-Gaussian kernel $\mathcal{N}(A^\parvar x_{t-1}, Q^\parvar)$, and $\vec{\phi}^\parvar(x_{t-1}, x_t)$ the density of that latter kernel evaluated at $x_t$, then closed form updates can be derived for all parameters at any timestep when choosing

\begin{itemize}
    \item $\eta_t^\parvar = \expect{\vd{t-1}}{\vec{\eta}^\parvar(X_{t-1})} + \bar{\eta}_{y_t}^\parvar$ where $\bar{\eta}_{y_t}^\parvar = \mathsf{MLP}^\parvar(y_t)$ is a natural parameter. Here, the expectation on the right hand side is analogous to the predict step in Kalman filtering, but assimilation of the observation can involve a complex nonlinear mapping, as originally proposed in \citet{johnson2016}.
    \item $\fwdpot{t}(x_{t-1}, x_t) \propto \vec{\phi}^\parvar(x_{t-1}, x_t)$, in which case $\fwdpot{t}(\cdot, x_t)$ is still conjugated to $\vd{t-1}$ for any $x_t$, and the parameters of $\vd{t-1|t}$ can be derived as explained above simply by deriving the natural parameter which makes $\vec{\phi}^\parvar(\cdot, x_t)$ conjugated to $\vd{t-1}$.
\end{itemize}

In this setting the backward kernels are linear and Gaussian, and the only neural network involved in the variational approximation is used to assimilate the observations. 
Additionally, the parameters $(A^\parvar, Q^\parvar)$ are shared between the updates for $(\vd{t})_{t \geq 0}$ and those for $(\vd{t-1|t})_{t \geq 1}$, which is analogous to the true model recursions where the forward transition kernels are involved both in the filtering recursions and in the definition of the backward kernels.

\section{Functional regression}
\label{appdx:func_regression}
Here, we recall the alternate option used in \citet{campbell2021online} to propagate approximation of the backward expectations.
Denoting $\mathcal{F} = \left\lbrace g: \rset^p \to \rset^{\statedim}, \mathbb{E}_{\vd{t}}[\|g(X_t)\|_2] < \infty\right\rbrace$,  $\Hstat{t}(x)$ satisfies (by definition of conditional expectation):
$$
\Hstat{t} = \operatorname*{argmin}_{g \in \mathcal{F}} \mathbb{E}_{\vd{t-1:t}(X_{t-1}, X_t)} \|g(X_t) -  [\Hstat{t-1}(X_{t-1})  + \addfelbo{t}(X_{t-1}, X_t)]\|_2,
$$
which provides a regressive objective for learning an approximation of $\Hstat{t}$.
In practice authors restrict the minimization problem to a subset of $\mathcal{F}$, a parametric family of functions (typically, a neural network) parameterized by $\gamma$, belonging to $\Gamma \subset \rset^{d_\gamma}$, and learn this by approximating the expectation with Monte Carlo method. 
Namely, the authors propose to estimate $\Hstat{t}$ by $\paramapproxvarcondE{t}$ where 
\begin{equation}
\label{eq:campbell:minimization}
\hat{\gamma}_t = \operatorname*{argmin}_{\gamma \in \Gamma}
\frac{1}{N}\sum_{k = 1}^N \|
\paramvarcondE(\xi^k_t) - 
[\paramapproxvarcondE{t-1}(\xi^k_{t-1})  + \tilde{h}_t(\xi^k_{t-1}, \xi^k_t)]\|_2\eqsp,
\end{equation}
where $\left\lbrace (\xi_{t-1}^i, \xi_{t}^i) \right\rbrace_{i=1,\dots,N}$ is  an i.i.d. sample under the variational joint distribution of $(X_{t-1}, X_t)$ which has density $\vd{t-1:t} = \vd{t}\vd{t-1|t}$.
Upon convergence, $\paramapproxvarcondE{t}$ is then used in the successive recursions (in the type of $\eqref{eq:approx:Hstat:online}$). 

\section{Experiments settings}

\subsection{Appendix for section \ref{sec:xp:chaotic:rnn}}
\label{appdx:chaotic:rnn}

\paragraph{Parameters for the chaotic RNN}
We choose the same hyperparameters than \citet{campbell2021online} with $\Delta=0.001$, $\tau=0.025$, $\gamma=2.5$, $2$ degrees of freedom and a scale of $0.1$ for the Student-$t$ distribution, and define $Q = \text{diag}(0.01)$. 

\paragraph{Implementation settings for the comparison with \citet{campbell2021online}}

Each $\parvar^t$ contains the parameter $\eta_t = (\mu_t, \Sigma_t)$ of the distribution $\vd[\parvar^t]{t} \sim \mathcal{N}(\mu_t, \Sigma_t)$ and the parameter $\tilde{\eta}_t$ of the function $\fwdpot[\parvar^t]{t}$. 
For this latter function, we match the number of parameters of \citet{campbell2021online} by defining $\fwdpot[\parvar^t]{t}(x,y) = \exp{(\tilde{\eta}_t(y) \cdot T(x))}$ with $\tilde{\eta}_t(y) = (\tilde{\eta}_{t,1}(y), \tilde{\eta}_{t,2})$ where $y \mapsto \tilde{\eta}_{t,1}(y)$ is a multi-layer perceptron with 100 neurons from $\statesp$ to $\statesp$, and $\tilde{\eta}_{t,2}$ is a negative definite matrix. We follow the optimization schedules of \citet{campbell2021online} with $K=500$ gradient steps at each timestep.

\section{Full algorithm with backward sampling and control variate}
\label{appdx:full:algo}
\begin{algorithm}[ht]
\caption{One iteration of the online gradient ascent algorithm (for $t\geq 1$) in the amortized scheme}
\begin{algorithmic}
\REQUIRE \textit{}
\begin{itemize}
    \item Previous statistics $\{\approxGstat[\parvar_{t-1}]{t-1}{i}, \approxHstat[\parvar_{t-1}]{t-1}{i}\}_{i=1}^N$;
    \item Previous samples $\{\sample{t-1}{i}\}_{i=1}^N$;
    \item Intermediate quantity $a_{t-1}$ (see Section \ref{sec:computational_considerations} \textit{Parameterization of variational distributions});
    \item Current parameter estimate $\parvar_{t}$; 
    \item Step size $\gamma_t$ for the gradient optimization procedure;
    \item New observation $y_t$.
\end{itemize}
\ENSURE $\{\approxGstat[\parvar_{t}]{t}{i}, \approxHstat[\parvar_{t}]{t}{i}\}_{i=1}^N$, $\parvar_{t + 1}$, $a_t$.
\STATE
Compute $a_t = \mathsf{MLP}^{\parvar_t}(a_{t-1}, y_t)$.
\STATE 
Compute $\eta^{\parvar}_t = \mathsf{MLP}^{\parvar_t}(a_{t})$, the parameters of $q_t^{\parvar_{t}}$ and sample $\{\sample{t}{i}\}_{i=1}^N$ i.i.d. with distribution $\vd[\parvar_{t}]{t}$.
\FOR{$i=1$ to $i=N$}
\STATE Compute $\tilde{\eta}_{t}^{\parvar_{t},i} = \mathsf{MLP}^{\parvar_{t}}(\sample{t}{i})$
\FOR{$j=1$ to $j=M$}
\STATE  \textit{// Backward sampling step, M is the number of backward samples} 
\STATE Sample $(j_k)_{1\leq k \leq M} \overset{\text{i.i.d.}}{\sim} \mathsf{Cat}(\{\nrmbackwdweight[\parvar_t]{t}{i,j}\}_{1\leq j \leq N})$ with the weights of \eqref{eq:backwardweights:online}. 
\ENDFOR
\STATE Compute \textit{// Recall that each term $\addfelbo[\parvar_t]{t}$ depends on $y_t$.}
\begin{align*}
\approxHstat[\parvar_{t}]{t}{i} &= 
\frac{1}{M}\sum_{k=1}^M
\left\{
\approxHstat[\parvar_{t - 1}]{t - 1}{j_k} + \addfelbo[\parvar_{t}]{t}(\sample{t-1}{j_k},\sample{t}{i})\right\}\eqsp,\\
\approxGstat[\parvar_{t}]{t}{i} &= 
\frac{1}{M}\sum_{k=1}^M
\left\{
\approxGstat[\parvar_{t - 1}]{t-1}{j_k} + \grad \log \vd[\parvar_{t}]{t-1|t} (\sample{t-1}{j_k},\sample{t}{i})\left( \approxHstat[\parvar_{t-1}]{t}{j_k} + \addfelbo[\parvar_{t}]{t}(\sample{t-1}{j_k},\sample{t}{i}) - \approxHstat[\parvar_{t}]{t}{i} \right)\right\}\eqsp.
\end{align*}
\STATE  \textit{// Note the difference with \eqref{eq:approx:Gstat:online} and the inclusion of control variate $\approxHstat{t}{i}$ for the computation of $\approxGstat{t}{i}$} 
\STATE \textit{// }$\grad\vd[\parvar_{t}]{t-1|t} (\sample{t-1}{j_k},\sample{t}{i})$ \textit{is typically computed with automatic differentiation}
\ENDFOR
$$
\parvar_{t + 1} = \parvar_{t} + \frac{\gamma_t}{N}\sum_{i=1}^N
\left\{\approxGstat[\parvar_{t}]{t}{i} + \grad \log \vd[\parvar_{t-1}]{t}(\sample{t}{i})\left(\approxHstat[\parvar_{t}]{t}{i} - \frac{1}{N}\sum_{k=1}^N \approxHstat[\parvar_{t}]{t}{k}\right)\right\}\,.
$$
\STATE \textit{// Note the difference with \eqref{eq:approx:grad:ELBO:T} and the inclusion of control variate $\frac{1}{N}\sum_{i=1}^N \approxHstat{t}{i}$ for the computation of $\widehat{\grad} \ELBO{t}$}
\STATE \textit{// }$\grad \log \vd[\parvar_{t-1}]{t}(\sample{t}{i})$ \textit{is typically computed with automatic differentiation}
\end{algorithmic}
\end{algorithm}
\end{document}